\documentclass[conference]{IEEEtran}

\usepackage{cite}
\usepackage{amsmath,amssymb,amsfonts}
\usepackage{amsthm}
\usepackage{graphicx}
\usepackage{textcomp}
\usepackage{enumitem}
\usepackage{booktabs}
\usepackage{threeparttablex}
\usepackage{algorithm}
\usepackage{algpseudocode}
\usepackage{tikz}
\usepackage{pgfplots}
\usepackage{soul,xcolor}
\usepackage{adjustbox}
\usepackage[hidelinks]{hyperref}

\pgfplotsset{compat=1.18}
\newtheorem{theorem}{Theorem}

\def\BibTeX{{\rm B\kern-.05em{\sc i\kern-.025em b}\kern-.08em
T\kern-.1667em\lower.7ex\hbox{E}\kern-.125emX}}

\begin{document}

\title{Plug-and-Play Quantum-Resistant BLE Pairing for Medical Implants via NFC Out-of-Band}

\author{
\IEEEauthorblockN{
Muhammad Asim Javaid\IEEEauthorrefmark{1},
Muhammad Shafay Tanveer Hassan\IEEEauthorrefmark{1},
Faaiz Umer\IEEEauthorrefmark{1},
Sotirios Panagiotou\IEEEauthorrefmark{4},\\
Christos Strydis\IEEEauthorrefmark{4},
Naveed Anwar Bhatti\IEEEauthorrefmark{1},
Muhammad Adeel Pasha\IEEEauthorrefmark{1},
Muhammad Ali Siddiqi\IEEEauthorrefmark{1}\IEEEauthorrefmark{4}
}
\IEEEauthorblockA{\IEEEauthorrefmark{1}Syed Babar Ali School of Science and Engineering, Lahore University of Management Sciences, Lahore, Pakistan}
\IEEEauthorblockA{\IEEEauthorrefmark{4}Department of Neuroscience, Erasmus University Medical Center, Rotterdam, The Netherlands}
}

\maketitle

\begin{abstract}
Bluetooth Low Energy (BLE) pairing establishes the cryptographic foundation for secure device communication. However, mainstream man-in-the-middle (MITM)-resistant pairing methods, such as Numeric Comparison and Passkey Entry, require user interfaces that implantable medical devices (IMDs) inherently lack, making Near-Field Communication (NFC)-assisted Out-of-Band (OOB) pairing an attractive alternative. Existing NFC-assisted OOB schemes authenticate a classical BLE key exchange that remains vulnerable to quantum attacks, while the NFC channel itself is susceptible to eavesdropping and active injection under stronger threat models. To address these limitations, we propose a lightweight, plug-and-play NFC-based OOB pairing protocol that performs a post-quantum key encapsulation mechanism (KEM) entirely over the NFC channel, ensuring that no shared secret is transmitted over NFC while reducing long-range radio-frequency (RF) exposure during pairing. The proposed protocol requires no modifications to the BLE stack and inherently resists RF battery-depletion attacks. Evaluation on an IMD-class proof-of-concept testbed demonstrates that post-quantum OOB pairing is practical on resource-constrained devices, reducing projected battery life by less than 0.57\% for Kyber-1024 and 0.56\% for FireSABER under an operational usage model relative to the MITM-vulnerable Just Works method.
\end{abstract}

\begin{IEEEkeywords}
IMD, Bluetooth Low Energy, post-quantum cryptography, KEM, out-of-band, NFC, battery DoS
\end{IEEEkeywords}

\section{Introduction}

Bluetooth Low Energy (BLE) employs user-mediated methods such as \emph{numeric comparison} or \emph{passkey entry} to prevent MITM attacks during pairing~\cite{BluetoothCoreV3H}. These approaches, however, assume that both devices provide a user interface (screen and/or keypad). Implantable medical devices (IMDs) are among the most security-sensitive BLE endpoints and typically lack any such interface, making such pairing methods infeasible (see Fig.~\ref{fig:overview}A). A seemingly straightforward alternative is to perform public-key cryptography (PKC) directly over the BLE radio channel for key establishment. However, this requires a public-key infrastructure (PKI) with certificates, which BLE pairing deliberately avoids since devices (including IMDs) typically lack Internet connectivity and cannot validate or update certificates, making such an approach infeasible in practice.

Consequently, the BLE specification defines an Out-of-Band (OOB) pairing method~\cite{BluetoothCoreV3H}, particularly suited for devices without a user interface. In OOB pairing, devices exchange authentication data over an auxiliary channel (e.g., NFC) and subsequently complete the BLE pairing procedure over the wireless link to establish an encrypted session (see Fig.~\ref{fig:overview}B). The security of this process therefore depends on both the OOB channel and the underlying key-establishment mechanism.
Among the available OOB technologies, Near-Field Communication (NFC) is the preferred choice due to its low energy consumption, short nominal range, and widespread support in smartphones and clinical programmers~\cite{BTSSP2020}. However, NFC communications remain susceptible to eavesdropping at greater-than-expected distances and, under stronger adversarial models, to active injection attacks~\cite{al2017towards,sun2020you}. Moreover, the BLE LE Secure Connections OOB model still relies on classical elliptic-curve cryptography for key establishment, leaving it vulnerable to future quantum adversaries.

\begin{figure}[!t]
    \centering
    \includegraphics[width=0.9\linewidth]{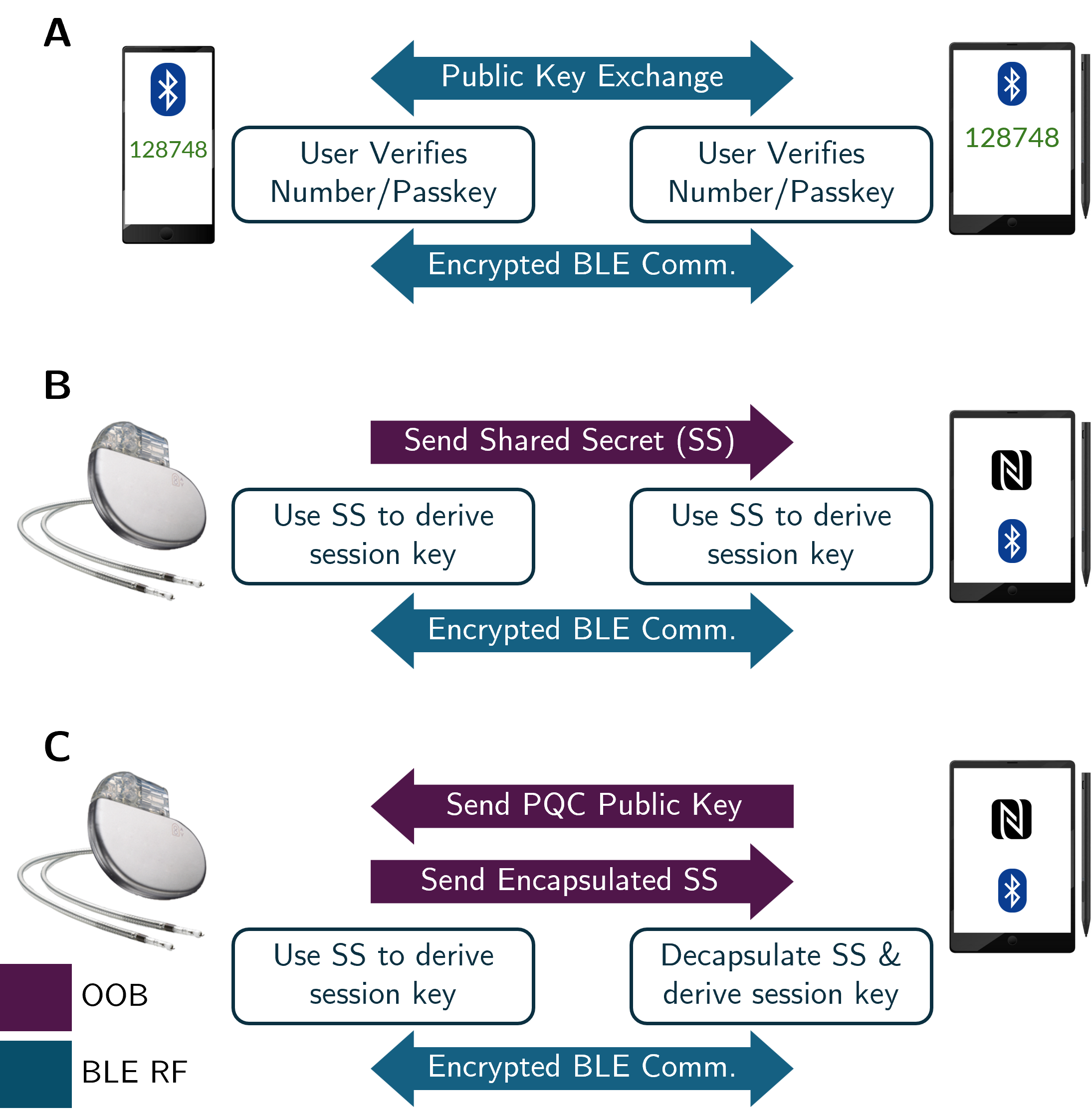}
    \caption{(A) BLE pairing with Numeric Comparison / Passkey Entry: User-mediated verification of public keys prevents MITM. (B) BLE pairing with IMD via NFC OOB: A shared secret is transferred over NFC, leaving it vulnerable to eavesdropping. (C) Proposed PQC-based NFC OOB pairing: The shared secret is established via KEM, ensuring secrecy and plug-and-play compatibility with BLE.}
    \label{fig:overview}
\end{figure}

To address these limitations, we propose a lightweight NFC-based OOB pairing protocol that performs a post-quantum key encapsulation mechanism (KEM) entirely over the NFC channel. This strengthens the OOB exchange against passive eavesdropping, renders active MITM infeasible under the stated NFC threat model via NFC's touch-to-access property, and reduces long-range RF exposure during pairing. The protocol is fully BLE stack-compatible, requiring no modifications to the BLE stack or Bluetooth Core Specification, as the OOB exchange is implemented entirely at the application layer.
We validate the proposed protocol on an IMD-class testbed using \textsf{CRYSTALS-Kyber}~\cite{avanzi2019crystals}, the basis of NIST's standardized ML-KEM~\cite{nistFIPS203}, and the Round 3 finalist \textsf{SABER}~\cite{d2018saber}.

In summary, this paper makes the following contributions:

\begin{enumerate}[label=\textbf{C\arabic*}.,
                  leftmargin=*,
                  itemsep=2pt,
                  topsep=2pt]
    \item We design a \textbf{lightweight BLE pairing protocol} for IMDs that uses a post-quantum KEM over the NFC OOB channel to prevent passive eavesdropping, strengthen resistance to active MITM attacks, and mitigate RF-based battery denial-of-service (DoS) attacks~\cite{siddiqi2021zero,siddiqi2019towards}.
    \item We implement the protocol on a real-world testbed demonstrating its \textbf{plug-and-play} compatibility with existing BLE stacks.
    \item We evaluate protocol \textbf{feasibility on resource-constrained IMDs} using lattice-based PQC schemes (\textsf{Kyber}, \textsf{SABER}), reporting execution time, energy consumption, and battery lifetime.
\end{enumerate}

\section{Background}
\label{sec:background}

\subsection{BLE Association Models}
\label{sec:ble_security}

BLE defines several \emph{association models}, i.e., mechanisms for establishing shared keying material during pairing, which is subsequently used to derive session keys for secure communication. The achieved security level depends on the devices' input/output (I/O) capabilities and the association model selected during feature exchange~\cite{BluetoothCoreV3H}. The main models are summarized as follows.

\emph{Just Works} establishes a shared key without user verification of the peer. It provides no protection against active MITM attacks because it lacks authentication. It is the default fallback for devices without displays or keypads, including resource-constrained endpoints, but is unsuitable for safety-critical IMDs due to its weak security.

In \emph{Passkey Entry}, one device displays a six-digit passkey that the user enters on the other device. This enables authenticated pairing and strong MITM resistance. However, it requires a display on one device and input capability on the other, making it impractical for IMDs without a user interface.

\emph{Numeric Comparison} displays a six-digit value on both devices, which the user confirms as matching. It provides strong MITM protection with less user effort than Passkey Entry, but similarly requires displays on both devices and is therefore unsuitable for IMDs.

Finally, the \emph{Out-of-Band (OOB)} model uses a secondary channel (e.g., NFC) to exchange authentication data used in key derivation. It eliminates the need for displays or keypads and is therefore attractive for IMDs. Its security, however, depends on the trustworthiness and attack resistance of the OOB channel.

\subsection{The Need for Post-Quantum Cryptography}
\label{sec:pqc_overview}

PQC studies cryptographic algorithms believed to remain secure against adversaries equipped with large-scale quantum computers. The urgency stems from the fact that widely deployed public-key schemes, RSA, Diffie-Hellman, and elliptic-curve cryptography (which is used in BLE), would be rendered insecure by quantum algorithms such as Shor's once scalable quantum hardware becomes available. This threat is compounded by ``store-now, decrypt-later'' adversaries who archive encrypted traffic today for future decryption~\cite{paar2024understanding}. However, PQC primitives generally incur higher computational and memory overheads, posing challenges for resource-constrained IoT and IMD platforms.

For bootstrapping on constrained devices, key encapsulation mechanisms (KEMs) are particularly suitable for establishing shared secrets from which symmetric keys are derived. Among these, \textit{lattice-based} KEMs offer comparatively efficient key establishment. In this work, we focus on Kyber and SABER\@. Kyber is the basis of NIST's standardized ML-KEM, while SABER was a Round 3 finalist. They reflect distinct lattice design approaches, Module Learning With Errors (Kyber) and Module Learning With Rounding (SABER), offering complementary perspectives on feasibility in IMD-class devices.

\subsection{Related Work}
\label{sec:related_work}

Considerable research effort has been devoted to securing IMDs over the past two decades, yet this has not been effectively translated into commercial practice~\cite{siddiqi2022securing}; one such shortcoming concerns the quantum readiness of the underlying communication standards. BLE LE Secure Connections pairing relies on Elliptic Curve Diffie-Hellman (ECDH). Although classically secure, ECDH is vulnerable to quantum adversaries, creating a long-term security gap for these devices. Recent work therefore explores lightweight PQC schemes that provide quantum resistance without overwhelming constrained platforms. Liu et al.~\cite{liu2024towards} propose a backward-compatible design that replaces ECDH with KYBER-512 in the BLE stack. Their approach maps key generation, encapsulation, and decapsulation onto GATT characteristics, while claiming interoperability with existing systems. However, the protocol still exchanges sensitive information such as the initial pairing request, acknowledgment, and public key over the long-range RF channel, leaving it vulnerable to active MITM attacks despite post-quantum security. Its use of a 404-byte maximum transmission unit (MTU) requires support for that negotiated size at both endpoints.

A natural progression is the use of OOB pairing, which shifts sensitive key exchange to an auxiliary channel, reducing exposure on the BLE RF link itself. Once established, the secure link supports normal BLE communication. Candidate OOB channels include wired links, body-coupled communication (BCC), optical signaling (e.g., LED/QR codes), human motion, and NFC, each offering different trade-offs in usability, hardware constraints, and threat assumptions~\cite{ban2018bcc,zhang2024oobkey,siddiqi2021securing,latvala2020evaluation,mao2017investigation,chung2025implantable,inyangson2025sok}. Among these, NFC is the most practical option for IMDs and constrained IoT devices. It enforces close proximity, is widely supported on smartphones and clinical programmers, and enables an intuitive touch-to-pair workflow. Consequently, it has become the de facto OOB channel in industry practice and joint NFC Forum/Bluetooth SIG specifications~\cite{BTSSP2020}.

In current (non-IMD) deployments, NFC is commonly used to convey BLE OOB authentication data without additional protection, relying on the short operating range of NFC to provide security. However, prior studies demonstrate passive eavesdropping beyond nominal distances~\cite{al2017towards} and active injection attacks at close proximity with specialized hardware~\cite{sun2020you}. Therefore, while NFC remains the most practical OOB channel for IMDs, its security assumptions require strengthening.

In summary, several BLE pairing methods have been explored in the literature~\cite{shi2025formal}; however, to the best of our knowledge, none demonstrates a quantum-resistant OOB pairing mechanism that leaves the BLE stack unmodified. Our approach is the first to achieve this, securing the pairing process itself against passive eavesdropping, rendering active MITM attacks infeasible under the stated NFC threat model, and mitigating RF-based battery DoS attacks.

\section{System and Threat Models}
\label{sec:model}

\subsection{System Overview}

We consider a system comprising two main components: an \textbf{IMD} and a \textbf{programmer} (e.g., tablet or smartphone), as illustrated in Fig.~\ref{fig:overview}C. The IMD is a resource-constrained endpoint, while the smartphone serves as a resource-rich interface for users and clinicians. Communication between these components occurs over two complementary channels: (i) NFC is used as a short-range OOB channel for bootstrapping cryptographic material during pairing, after which (ii) BLE provides the primary, higher-bandwidth link for ongoing encrypted data exchange.

\subsection{Threat Model}
\label{sec:threat_model}

We model the adversary as a probabilistic polynomial-time (PPT) algorithm $\mathcal{A}$ interacting with honest protocol participants over the BLE and NFC channels.

\subsubsection{Channel Model}

\paragraph{Bluetooth Channel.}
The Bluetooth channel is modeled under the standard Dolev--Yao adversarial setting. $\mathcal{A}$ has full control over this channel and may:
\begin{itemize}
    \item eavesdrop on all messages,
    \item block, delay, replay, or reorder transmissions,
    \item inject arbitrary protocol-level messages,
    \item initiate and interleave concurrent protocol sessions.
\end{itemize}

\paragraph{NFC Channel.}
Due to its physical proximity requirements, NFC communication requires close-range interaction. We assume that large-scale or remote physical-layer manipulation of the NFC carrier signal requires specialized hardware and close physical access beyond that of a typical remote adversary. Empirical studies report that passive NFC eavesdropping is feasible at distances of up to approximately 10\,m under favorable conditions~\cite{al2017towards}, while practical active injection and relay attacks are generally limited to distances below approximately 0.5\,m under laboratory conditions~\cite{sun2020you}, reflecting the substantially tighter proximity constraints of active attacks relative to passive eavesdropping. Accordingly, active physical-layer signal injection is considered infeasible under this model; since replaying a captured message or relaying a live exchange likewise requires actively transmitting into the NFC near-field at this same close range, we treat replay and relay as similarly infeasible against a remote adversary. We therefore assume that, within communication range, the adversary is limited to passive eavesdropping on NFC exchanges.

\subsubsection{Cryptographic Assumptions}
All employed cryptographic primitives (hash functions, MACs, authenticated encryption, and key derivation functions) satisfy their standard security definitions. Each NFC key exchange establishes fresh keying material. The lattice-based KEM is assumed to be IND-CCA secure against probabilistic polynomial-time adversaries.

\subsubsection{Security Arguments}

\begin{theorem}[Resistance to MITM on NFC KEM Exchange]
\label{theorem-mitm}
Under the above channel models and assumptions, a textbook MITM on the NFC-based post-quantum KEM is infeasible.
\end{theorem}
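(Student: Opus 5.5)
The plan is to argue by cases on what a ``textbook MITM'' needs, and to show each capability is excluded by the NFC channel model or by IND-CCA security of the KEM. First we define the attack. On the NFC exchange the programmer sends $pk$, or the IMD does; the roles do not change the argument. The peer answers with $(c,K)\leftarrow\mathsf{Encaps}(pk)$, and the holder of $sk$ recovers $K=\mathsf{Decaps}(sk,c)$. A textbook MITM $\mathcal{A}$ succeeds if, at the end, it shares a key with at least one honest party who believes it shares that key with its intended peer. Equivalently, it splits the session into two keys $K_1,K_2$ that it knows, or it learns the session key $K$. The standard way to do this is to substitute its own public key $pk'$ for $pk$, or its own ciphertext $c'$ for $c$.

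\textbf{Step 1 (active substitution is excluded).} Substituting $pk'$ or $c'$ means suppressing or overwriting an honest NFC frame and delivering an adversarial one to the honest receiver. Both are active transmissions into the NFC near field. By the NFC channel model, within range the adversary can only eavesdrop passively; injection, replay and relay are treated as infeasible. So every NFC message an honest party receives was sent by its honest peer, and the transcript both parties see is the same pair $(pk,c)$. This already rules out the split-key outcome, because both honest parties compute the same $K$ from the same $(pk,c,sk)$. Correctness of the KEM gives this up to its negligible decryption-failure probability.

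\textbf{Step 2 (passive learning of $K$ is excluded).} What remains is a passive adversary who sees $(pk,c)$ and tries to output $K$, or to tell it apart from random. Here we reduce to KEM security. Given a PPT $\mathcal{A}$ that distinguishes the real $K$ from random with advantage $\epsilon$, we build $\mathcal{B}$ against IND-CPA security, which IND-CCA implies. $\mathcal{B}$ receives a challenge $(pk,c^*,K_b)$, embeds $pk$ and $c^*$ as the NFC transcript, and answers every other query by simulating the remaining honest behaviour. That covers the BLE Dolev--Yao traffic, where OOB material derived from $K$ is used only through the KDF/MAC. The reduction uses $K_b$ wherever $K$ is needed. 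Freshness of each NFC exchange means different sessions use independent key pairs, so a hybrid over sessions extends this to many sessions at a polynomial cost. Through the KDF and MAC assumptions, the BLE messages reveal nothing useful beyond $K_b$. We conclude $\epsilon$ is negligible.

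The main obstacle is Step 2's handling of BLE. The adversary fully controls BLE and may use the BLE LE Secure Connections OOB confirm values as oracles on data derived from $K$, for example by opening concurrent sessions. We plan to treat the BLE layer as a black box keyed by a KDF output of $K$. In the hybrid, we first replace $K$ by a uniform key, and then appeal to the BLE OOB authentication guarantee under a uniformly random OOB secret. If needed, we would use IND-CCA rather than only CPA, so that any decapsulation-like oracle exposed through protocol interleavings can be simulated. Step 1 is essentially definitional and rests entirely on the stated NFC assumption.
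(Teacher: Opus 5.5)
Your Step 1/Step 2 decomposition matches the paper's sketch. Substitution of NFC messages is ruled out by the passive-only NFC model, and the passive eavesdropper is defeated by KEM security; your explicit reduction, correctness remark, and observation that IND-CPA already suffices (no adversarial ciphertext ever reaches decapsulation) are refinements of the paper's bare appeal to IND-CCA. The BLE ``obstacle'' is moot, however: the protocol performs no SMP/LE Secure Connections pairing and uses $K$ only through a KDF and application-layer AEAD over GATT, so the paper's KDF/AEAD assumptions close that step without any appeal to a BLE OOB authentication guarantee.
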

\begin{proof}[Proof Sketch]
A classical MITM attack on unauthenticated public-key exchange requires the adversary to intercept and substitute public keys, thereby establishing separate shared secrets with each honest party. Under our NFC threat model, the adversary may passively eavesdrop but cannot synthesize new protocol-compliant messages or perform real-time modification of NFC frames.

Since public-key substitution is excluded by assumption, the adversary cannot instantiate two independent KEM sessions with the honest devices. Moreover, by the IND-CCA security of the underlying lattice-based KEM, passive observation alone does not reveal the derived shared secret. Therefore, under the stated proximity-constrained NFC model, a textbook MITM attack on the NFC-based PQC KEM is infeasible.
\end{proof}

\begin{theorem}[Protection Against Remote RF Battery DoS]
\label{theorem-bdos}
Under the aforementioned channel models, a PPT adversary without close physical proximity cannot induce unbounded energy consumption on the IMD via the BLE interface.
\end{theorem}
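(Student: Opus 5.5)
The argument is an accounting of energy: every joule the IMD spends on BLE must be attributed to some event, and I will show that each such event is either gated by an NFC interaction or bounded by a constant. Following the protocol design, the IMD's BLE radio stays disabled (no advertising, no scanning, no pending connection) until an NFC exchange has completed. An NFC exchange needs a programmer in touch range. The first step is to state this as a protocol invariant. The BLE controller is powered up only after the IMD has run decapsulation on an NFC-delivered ciphertext. Afterwards it accepts only a link whose OOB confirmation value matches the KEM-derived secret. I would also add that the BLE window is time-limited: a timeout $T_{\max}$ returns the radio to the off state.

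The second step splits a remote adversary's options into three cases.

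\emph{(a) Before any NFC session.} The radio is off, so BLE traffic from $\mathcal{A}$ triggers no reception, parsing, or cryptographic work. The marginal energy it causes is zero.

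\emph{(b) Starting an NFC session.} Under the NFC channel model, a remote adversary cannot actively transmit into the near field. Injection, replay and relay are all ruled out. So $\mathcal{A}$ cannot wake the radio itself. The same model shows $\mathcal{A}$ cannot force repeated KEM computations: the IMD acts as an NFC tag powered by the reader's field, and the reader must be physically present.

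\emph{(c) During a BLE window opened by a legitimate programmer.} Here $\mathcal{A}$ has full Dolev--Yao control of BLE. The IMD's cost is at most the active-radio power $P_{\mathrm{rx}}$ times $T_{\max}$, plus a bounded number of cheap OOB-confirmation checks. The KEM is not rerun over BLE. By the secrecy argument of Theorem~\ref{theorem-mitm} and IND-CCA security, $\mathcal{A}$ produces a valid confirmation only with negligible probability. Any spoofed pairing attempt therefore fails, and the radio shuts down at timeout or after a fixed retry limit.

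The third step combines the cases. The IMD's total BLE energy is at most $N_{\mathrm{NFC}}\cdot(E_{\mathrm{KEM}} + P_{\mathrm{rx}}T_{\max} + k\,E_{\mathrm{check}})$, where $N_{\mathrm{NFC}}$ counts legitimate touch events. That quantity is independent of $\mathcal{A}$'s running time and of the number of messages $\mathcal{A}$ sends. Once pairing is complete, the established link is encrypted and authenticated with the derived keys, and I would appeal to standard BLE behaviour: unauthenticated packets are dropped at the link layer at bounded per-connection-event cost. Together these give a bound rather than unbounded consumption.

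I expect the main obstacle to be case (c) and the post-pairing phase. There, $\mathcal{A}$ can keep the radio receiving by flooding packets, so the energy is not literally constant per message. The bound has to come from protocol-level duty cycling, namely the timeout and retry limit, and not from cryptography alone. I would handle this by making $T_{\max}$ and the retry counter explicit design assumptions. That way "unbounded" is rigorously excluded, and the claim is scoped to adversary-induced consumption beyond what legitimate NFC-gated activity already incurs.
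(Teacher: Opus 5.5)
Your cases (a) and (b) are the paper's entire argument. The paper's sketch observes that the IMD enables BLE only after a successful NFC interaction. The NFC model excludes injection, replay and relay, so a remote adversary cannot initiate NFC and hence cannot activate BLE. Within range, the adversary is purely passive and cannot create fresh session initiations. The sketch stops there.

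Where you go further is case (c) and the explicit bound $N_{\mathrm{NFC}}\cdot(\cdots)$. The paper never analyzes what a Dolev--Yao adversary can do once a legitimate programmer has opened the BLE window. It implicitly treats that activity as bounded by legitimate use, i.e.\ the 60\,s daily session of its energy model. Your version turns ``not unbounded'' into a concrete bound independent of $\mathcal{A}$'s message count. It does so only by adding the timeout $T_{\max}$ and a retry limit as design assumptions, which the paper's protocol does not specify. The paper needs no such assumption because it claims only that the adversary cannot \emph{trigger} BLE activity, not that triggered activity is bounded. You correctly identified the weak spot in that reasoning, and making $T_{\max}$ explicit is a genuine strengthening rather than a detour.

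Several protocol details in your write-up do not match the paper. None is load-bearing, but you should fix them.
\begin{itemize}
\item The IMD runs encapsulation, not decapsulation. The programmer generates $(pk,sk)$ and decapsulates.
\item There is no SMP OOB confirmation value. The paper deliberately skips SMP pairing and protects GATT application data with authenticated encryption under the KDF-derived key $k$. Spoofed traffic is therefore not ``dropped at the link layer''. The radio and stack receive it, and it is rejected after AEAD verification at the application layer. Forgery succeeds only with negligible probability because $k$ is secret by Theorem~\ref{theorem-mitm}. Per-packet cost is still bounded, so again only $T_{\max}$ bounds the total.
\item The PN532 front end runs from the IMD's battery (it is part of the measured 25\,mA during NFC operation), not from the reader's field. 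Drop that remark. It is unnecessary anyway, since the NFC channel model alone rules out remote initiation and hence repeated encapsulations.
\end{itemize}
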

\begin{proof}[Proof Sketch]
Protocol initiation is exclusively triggered via NFC, and the IMD enables BLE only after successful NFC interaction. The adversary has full control over BLE but NFC is proximity-constrained and excludes active signal injection, replay, and relay; hence a remote adversary cannot initiate NFC and therefore cannot activate BLE. Since BLE remains disabled without valid NFC initiation, repeated high-energy radio activity cannot be remotely triggered.

Within NFC range, an adversary is limited to passive eavesdropping and therefore cannot generate fresh valid session initiations on its own. Consequently, remote RF-based battery DoS is infeasible under the stated assumptions.
\end{proof}

\section{Proposed Protocol}
\label{sec:proposed_method}

\begin{figure}
    \centering
    \includegraphics[width=0.9\columnwidth]{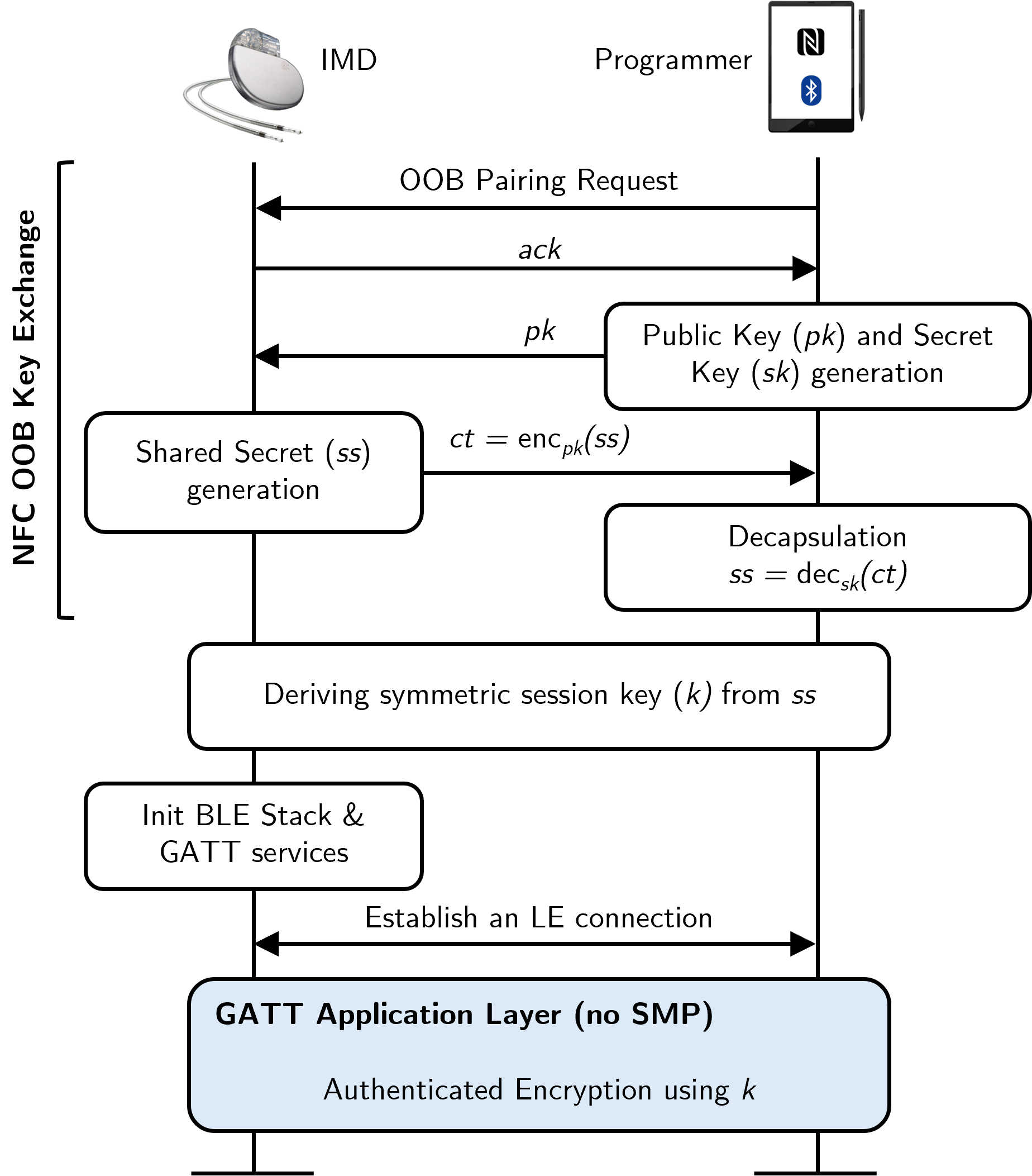}
    \caption{Proposed secure BLE communication protocol.}
    \label{fig:myplot}
\end{figure}

Our protocol first establishes a shared secret via a brief exchange over the NFC OOB channel, which is then used to generate a symmetric session key to secure the BLE communication (see Fig.~\ref{fig:overview}C). To preserve compatibility with existing BLE implementations, our design performs the entire post-quantum key encapsulation over the NFC OOB channel while leaving the BLE stack and Security Manager Protocol (SMP) unchanged. This approach avoids exposing public key exchange material over the long-range BLE radio channel and avoids dependence on larger negotiated BLE MTUs, both limitations of the KYBER-512-over-GATT design in~\cite{liu2024towards}. Consequently, the proposed protocol supports plug-and-play deployment with commodity BLE devices while providing quantum-resistant OOB bootstrapping. The protocol is illustrated in detail in Fig.~\ref{fig:myplot}.

\subsection{Public Key Transfer (Programmer \texorpdfstring{$\rightarrow$}{->} IMD)}
The protocol begins when the programmer sends an NFC application-selection request containing an application identifier (AID), which the IMD acknowledges. The programmer generates a post-quantum KEM key pair \((pk, sk)\) and sends the public key \(pk\) to the IMD for encapsulation. This design confines the IMD to a single encapsulation operation, leaving resource-intensive key generation and decapsulation to the programmer, while avoiding certificate overhead since active MITM attacks are infeasible (Theorem~\ref{theorem-mitm}).

Because PQC public keys exceed NFC payload limits, the key is fragmented into packets, each with a slim header for reassembly and duplicate/replay detection against a proximate attacker. A stop-and-wait protocol ensures the programmer sends one fragment at a time, waiting for IMD acknowledgment before proceeding (Algorithm~\ref{alg:nfc_stop_wait}). This guarantees complete reconstruction without exceeding the IMD's limited memory while preventing loss, duplication, or reordering.

\begin{algorithm}[t]
\small
\caption{Stop-and-Wait Fragment Transmission over NFC}
\label{alg:nfc_stop_wait}
\begin{algorithmic}[1]
\Require $M$, $C_{\max}$, timeout $\tau$, retry limit $R$, session $s$, type $\text{tag}$
\State Split $M$ into fragments $\{F_1, F_2, \ldots, F_n\}$, $|F_i| \le C_{\max}$
\For{$i = 1 \to n$}
    \State $H_i \gets (\text{tag}, |M|, \text{offset}_i, |F_i|, s)$
    \For{$r = 0 \to R$} \Comment{Initial attempt plus $R$ retries}
        \State Send $(H_i,F_i)$; await matching ACK/NACK or $\tau$
        \State On ACK: break; on NACK/timeout: continue
    \EndFor
    \State If no matching ACK: abort transfer
\EndFor
\Statex
\Comment{Receiver verification}
\State buffer $\gets \emptyset$; bytes\_received $\gets 0$
\For{each $(H_i, F_i)$ received}
    \If{header or payload invalid for this transfer}
        \State Send NACK; discard fragment
    \ElsIf{exact duplicate of an accepted fragment}
        \State Resend ACK; do not append
    \ElsIf{$\text{offset}_i =$ bytes\_received}
        \State Append $F_i$; bytes\_received $\gets$ bytes\_received$+|F_i|$; ACK
        \State If bytes\_received $=|M|$: deliver $M$ once
    \Else
        \State Send NACK; discard fragment
    \EndIf
\EndFor
\end{algorithmic}
\end{algorithm}

\subsection{Ciphertext Transfer (IMD \texorpdfstring{$\rightarrow$}{->} Programmer)}

After reassembling the public key, the IMD generates a shared secret $ss$ and encapsulates it using $pk$ to produce $ct = \text{enc}_{pk}(ss)$. If $ct$ exceeds the NFC frame size, it is fragmented into packets with headers specifying message type and total length for correct reassembly. The IMD sends fragments using the same stop-and-wait protocol (Algorithm~\ref{alg:nfc_stop_wait}), awaiting acknowledgment before sending the next. Once fully received, the programmer decapsulates $ct$ with $sk$ to recover $ss$.

\subsection{Key Derivation and Secure Communication}

The protocol establishes a shared secret through a post-quantum KEM exchange over the NFC out-of-band channel. Both endpoints derive a symmetric key $\mathit{k}$ from this secret using a key-derivation function and use authenticated encryption to protect application data transmitted over BLE. NFC carries the KEM public key and ciphertext, not $\mathit{k}$ itself. This OOB exchange bootstraps a secure channel at the Generic Attribute Profile (GATT) application layer and does not perform BLE SMP pairing. Android restricts OOB bond creation to system-privileged apps~\cite{androidBTPermissions}, while iOS exposes no pairing parameters through CoreBluetooth~\cite{appleCBDoc}. BLE remains disabled until the NFC key exchange completes.

\section{Implementation}
\label{sec:implementation}

\subsection{Hardware Setup}
The proof-of-concept IMD uses an ESP32-S3 connected to an NXP PN532 NFC transceiver via SPI. The ESP32-S3 provides BLE connectivity and sufficient processing for post-quantum operations under resource constraints, while the PN532 is a commodity NFC frontend typical in embedded systems. The programmer is a Samsung Galaxy A54 running Android~14, acting as NFC initiator and BLE central. Fig.~\ref{fig:experimental-setup} (left) shows the experimental setup.

\begin{figure}[!t]
    \centering
    \includegraphics[width=\columnwidth]{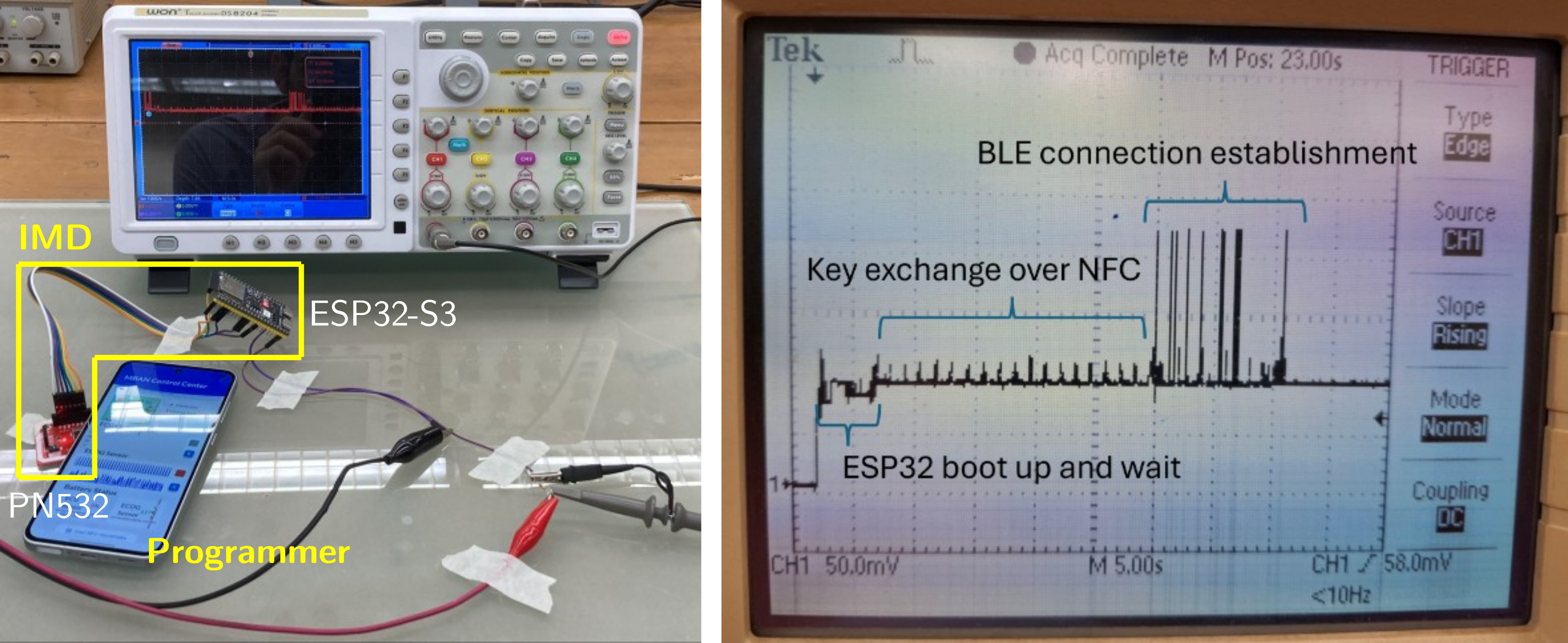}
    \caption{Left: Experimental setup for measuring IMD current during NFC-OOB key exchange and BLE data transmission. Right: Oscilloscope trace of the voltage drop across the series current-sense resistor.}
    \label{fig:experimental-setup}
\end{figure}

\subsection{Software Setup}
On the IMD side, the protocol is implemented using the built-in Bluedroid stack for BLE services. Cryptographic operations use the Open Quantum Safe \texttt{liboqs} library~\cite{liboqs} for Kyber and the reference SABER implementation from COSIC~\cite{saber}. On the programmer side, a mobile application was developed in Flutter~v3.19. Native Kotlin modules are invoked through platform channels to access Android's IsoDep API for NFC and the BluetoothGatt API for BLE services.

\subsection{Packet Fragmentation and Reassembly}

Each PQC public-key/ciphertext fragment (see Algorithm~\ref{alg:nfc_stop_wait}) includes a fixed 34-byte header: Message Tag (8 B), identifying the message type (public key or ciphertext); Total Message Length (4 B), specifying the total message length in bytes; Byte Offset (4 B), indicating the fragment's zero-based starting byte position within the message; Fragment Length (2 B), specifying the number of data bytes carried (0--200); and Session ID (16 B), associating fragments with the same key-exchange session. With up to 200 data bytes, each fragment occupies at most 234 bytes including its header, fitting within the 255-byte data field of a short command application protocol data unit (APDU).

\section{Results}\label{sec:results}

\subsection{Execution Time and Memory Footprint}

As reported in Table~\ref{tab:microbench} and illustrated in Fig.~\ref{fig:plots}A, end-to-end NFC--OOB session durations (Key ex.\ time) range from 11.4\,s to 20.7\,s for Kyber and 11.8\,s to 20.3\,s for SABER\@. The increase in duration correlates directly with the growth in ciphertext and public key sizes since larger messages require proportionally more NFC fragments to transmit. The encapsulation times for all variants are in the range of 10.0 to 47.2\,ms (cf. Table~\ref{tab:microbench}), representing less than 0.25\% of the corresponding session duration. These results indicate that session duration is dominated by NFC transfer and fragmentation overhead rather than by KEM computation. Moreover, all variants remain within the resource limits of IMD-class microcontrollers. The compiled firmware image sizes are in the range of 0.93 to 0.96\,MB, and reported DRAM usage remains below 45\%.

\begin{figure}
  \centering
  \includegraphics[width=0.85\columnwidth]{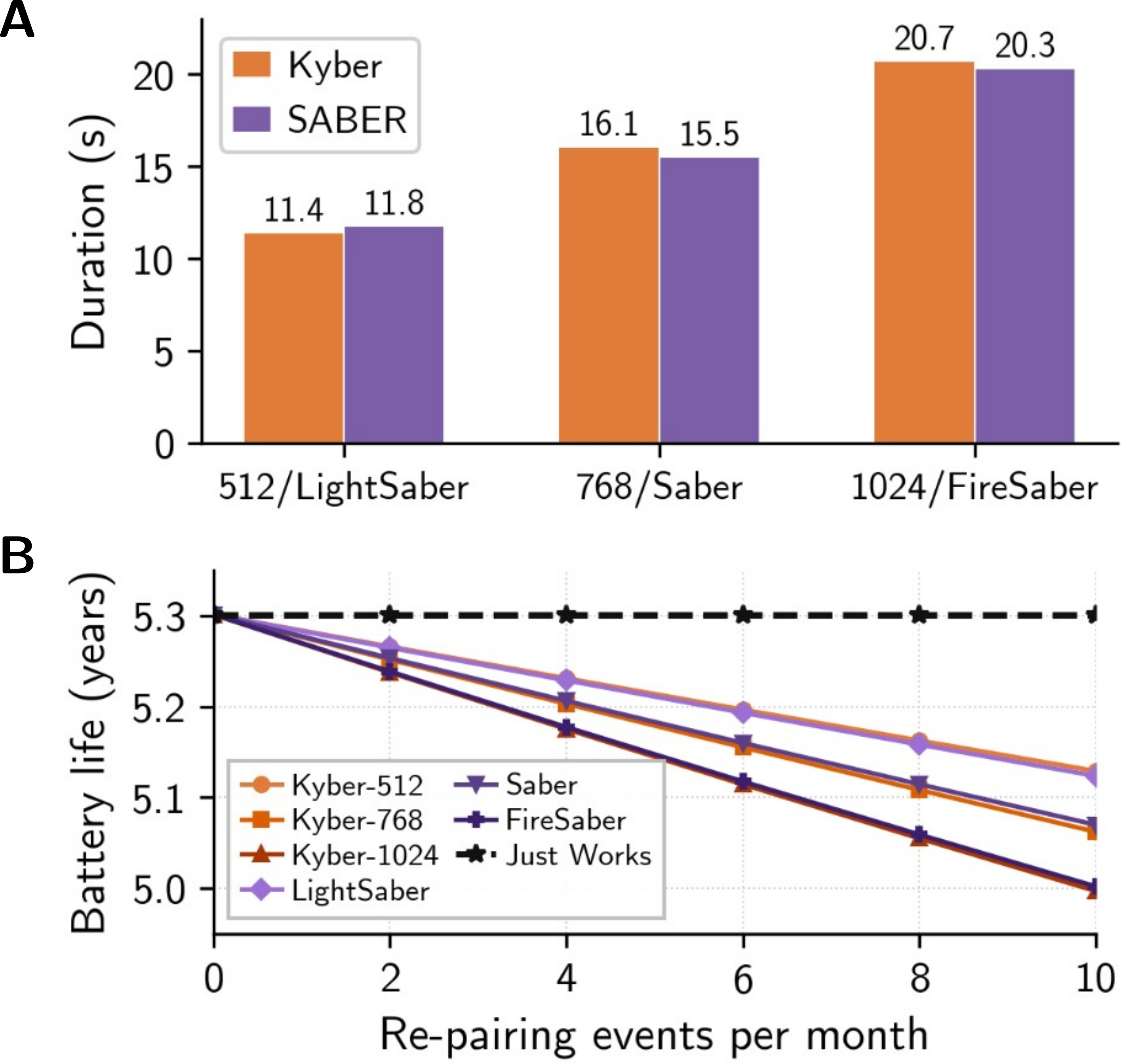}
  \caption{(A) OOB key exchange durations. (B) Battery life (1.5\,Ah) vs.\ frequency of re-pairing events.}
  \label{fig:plots}
\end{figure}

\begin{table*}[t]
\centering
\small
\caption{Micro-benchmark results for Kyber and SABER variants on ESP32-S3.}
\label{tab:microbench}
\begin{adjustbox}{max width=\textwidth}
\begin{tabular}{@{} l c c c c c c c @{}}
\toprule
Algorithm / Variant 
& $pk$ gen.\ \& trans.\ time (\textmu{}s) 
& $ct$ gen.\ time (\textmu{}s) 
& Encaps (\textmu{}s)
& Key ex.\ time (\textmu{}s)
& $ct$ size (B) 
& Image size (B) 
& DRAM usage (\%) \\
\midrule
Kyber-512    & 4\,652\,878 & 10\,508 & 10\,000 & 11\,421\,778 & 768  & 929\,756 & 42.66 \\
Kyber-768    & 6\,977\,926 & 15\,743 & 15\,300 & 16\,062\,045 & 1088 & 929\,676 & 43.73 \\
Kyber-1024   & 9\,304\,518 & 22\,517 & 22\,000 & 20\,712\,095 & 1568 & 929\,772 & 44.94 \\
\midrule
LightSABER  & 4\,789\,729 & 17\,447 & 16\,800 & 11\,792\,524 & 672  & 957\,548 & 42.81 \\
SABER        & 6\,085\,585 & 30\,733 & 30\,100 & 15\,522\,213 & 992  & 957\,236 & 43.84 \\
FireSABER   & 8\,480\,967 & 47\,820 & 47\,200 & 20\,342\,097 & 1568 & 957\,360 & 44.88 \\
\bottomrule
\end{tabular}
\end{adjustbox}
\end{table*}

\subsection{Energy Consumption and Autonomy}
\label{sec:energy}
The oscilloscope trace in Fig.~\ref{fig:experimental-setup} (right) records the voltage drop $V_{\mathrm{sense}}(t)$ across a 1\,$\Omega$ precision current-sense resistor placed in series with the IMD supply. We converted each sample to current using Ohm's law, $I(t)=V_{\mathrm{sense}}(t)/1\,\Omega$, and time-averaged the samples belonging to each labeled protocol phase. With the ESP32-S3 clocked at 80\,MHz, this procedure yielded average currents of 25\,mA during NFC operation and 35\,mA during BLE operation. Fig.~\ref{fig:plots}B shows the resulting projected battery lifetimes for a 1.5\,Ah, 3.3\,V battery, assuming a typical pacemaker stimulation energy of 20~\textmu{}J per heartbeat, an average heart rate of 70 beats/min~\cite{siddiqi2021securing}, and one 60\,s BLE session per day. This operational model assumes that the established symmetric key is retained across daily BLE sessions and refreshed through a monthly NFC exchange. Compared with the MITM-vulnerable Just Works baseline, the quantum-resistant NFC-OOB protocol reduces projected battery life by less than 0.57\% for Kyber-1024 and 0.56\% for FireSABER under this one-re-pairing-event-per-month operational workload.

\section{Discussion}
\label{sec:discussion}

The prototype demonstrates NFC-assisted post-quantum key establishment without BLE stack modifications. Its 11.4--20.7\,s exchange duration is suited to infrequent, supervised medical procedures. KEM encapsulation accounts for less than 0.25\% of this duration, making NFC transport the main optimization target. Larger NFC transfer payloads, higher supported bit rates, and fewer acknowledgment round trips could reduce latency; hardware acceleration would provide a smaller, complementary improvement.

Under the stated NFC threat model (Theorem~\ref{theorem-mitm}), close physical access to the patient is treated as a basis for trust, and NFC touch-to-access provides implicit authorization to initiate pairing without pre-provisioned programmer credentials or certificate exchange. Keeping BLE disabled until NFC initiation also limits exposure to remote RF-based battery DoS (Theorem~\ref{theorem-bdos}).

\section{Conclusion}
\label{sec:conclusion}

We have presented a lightweight, plug-and-play post-quantum pairing protocol for BLE-enabled IMDs that leverages NFC as an out-of-band channel. By integrating lattice-based KEMs (Kyber and SABER) into the NFC-assisted key exchange without modifying the BLE stack, the approach protects against passive eavesdropping, renders active MITM attempts infeasible under the stated NFC threat model, and mitigates RF-based battery DoS attacks, while imposing minimal overhead on resource-constrained IMDs. Experimental results on an IMD-class testbed demonstrate feasible execution times, memory footprints, and energy consumption, with a projected battery-life reduction below 0.57\% for Kyber-1024 and 0.56\% for FireSABER under operational usage assumptions. As session duration is dominated by NFC transfer rather than KEM computation, future work will focus on reducing transport latency through optimized NFC framing.

\bibliographystyle{IEEEtran}
\bibliography{references}

\end{document}